\documentclass[acmsmall, 10pt]{acmart}

\usepackage{soul}
\newtheorem{theorem}{Theorem}[section]
\usepackage{subfig}

\setcopyright{rightsretained}

\newtheorem{lemma}[theorem]{Lemma}

\newtheorem*{proof*}{Proof}
\newtheorem{definition}[theorem]{Definition}
\newtheorem*{theorem*}{Theorem}

\newtheorem{remark}[theorem]{Remark}
\newtheorem{notation}[theorem]{Notation}

\begin{document}

\title{On The Convexity Of The Effective Reproduction Number}
	\author{Jhonatan Tavori}
	\affiliation{%
		\institution{
			Blavatnik School of Computer Science, Tel-Aviv Univeristy}
		\country{Israel}
	}
	\email{jhonatan.tavori@cs.tau.ac.il}
	
	\author{Hanoch Levy}
	\affiliation{%
		\institution{
			Blavatnik School of Computer Science, Tel-Aviv Univeristy}
		\country{Israel}
	}
	\email{hanoch@tauex.tau.ac.il}
\begin{abstract}
	
In this study we analyze the evolution of the effective reproduction number, $R$, through a \textit{SIR} {spreading} process in {\it heterogeneous} networks; Characterizing its decay process allows to analytically study the effects of countermeasures on the progress of the {virus} under heterogeneity, and to optimize their policies.

A striking result of recent studies has shown that heterogeneity across {nodes/individuals} (or, super-spreading) may have a drastic effect on the spreading process progression, which may cause a non-linear decrease of $R$ in the number of infected individuals.
We account for heterogeneity and analyze the stochastic progression of the spreading process. We show that the decrease of $R$ is, in fact, \textit{\textbf{convex}} in the number of infected individuals, where this convexity stems from heterogeneity. The analysis is based on establishing \textit{stochastic monotonic relations} between the susceptible populations in varying times of the spread.

We demonstrate that the convex behavior of the effective reproduction number affects the performance of countermeasures used to fight a spread of a virus. {The results are applicable to the control of virus and malware spreading in computer networks as well.} We examine numerically the sensitivity of the Herd Immunity Threshold (HIT) to the heterogeneity level and to the chosen policy.

\end{abstract}

\keywords{Spreading Processes; Convex Optimization; Information Networks; Heterogeneity}
\maketitle

\section{Introduction}

In 1923 Topley and Wilson described experimental epidemics in which the rising prevalence of immune individuals would end an epidemic. They named this phenomenon as ”Herd-Immunity” \cite{topley1923spread}. 
Herd immunity is achieved when the effective reproduction number $R$ (the expected number of secondary infections produced by an infected node) reduces below $1$ and the number of infection cases diminishes. 

The fraction of the population which is required to contract the virus (or get immune in another way, e.g., get vaccinated) in order to reduce $R$ below 1 is known as the Herd Immunity Threshold (HIT).
Estimations of the COVID-19 HIT are used by governments worldwide in determining policies to fight the ongoing pandemic.
Thus, analyzing and predicting the behavior of the reproduction number, $R$, is important for {infectious disease control and immunization.}

The objective of this research is to {study and characterize the behavior of the effective reproduction number} in an infectiousness {\it heterogeneous} population, in order to {optimize disease-blocking strategies}.

A key factor inherent to our model and analysis is {\it heterogeneity of infectiousness and susceptibility}. 
Numerous real-world networks, including the human interaction network, have been known to be characterized by heterogeneity of connectivity and interaction between their individuals. 
The ongoing COVID-19 pandemic has made it possible to observe such a heterogeneity phenomenon conspicuously, as distinct individuals are infectious (likely to infect others) and susceptible (likely to become infected themselves) in various degrees. 
The estimates for the COVID-19 pandemic fits this property and asserts that between 5\% to 10\% of the infected individuals (i.e., the \textit{"super-spreaders"}) cause more than 80\% of the secondary infections \cite{endo2020estimating,miller2020full}.

In striking results of recent studies \cite{oz2021heterogeneity,britton2020mathematical} heterogeneity across individuals has been shown to possibly have a drastic effect on the viral process progression, and to \textit{reduce} dramatically the number of infected individuals prior to reaching {herd immunity}.
Further, it was shown that the decrease of $R$ in heterogeneous systems is not necessarily linear in the number of infected individuals (unlike the linearity under homogeneous populations). {Roughly speaking, the existence of nodes with high level of infectiousness/susceptibility (i.e., \textit{super-spreaders}) yields that they stochastically tend to get infected, develop immunity, and "leave the game" in an early stage of the epidemic process.}

A key question motivating this research, is {how disease-blocking  strategies and their performance are affected by heterogeneity?
To answer these questions we} analyze the stochastic progression of the spreading process during a natural-evolution (i.e., with no countermeasures) of the process and establish that the decrease of $R$ in heterogeneous networks is, in fact, \textit{convex} in the number of infected individuals.

The convexity of $R$ stems directly from the phenomenon of super-spreading (heterogeneity), where the proof is based on establishing stochastic monotonic relations between the susceptible populations in varying times of the spread. 
The decay of $R()$, for various heterogeneity levels, is demonstrated in Figure \ref{Fig:DiffCVR}.

\begin{figure}[h]
	\centering
	\includegraphics[width=0.678\linewidth]{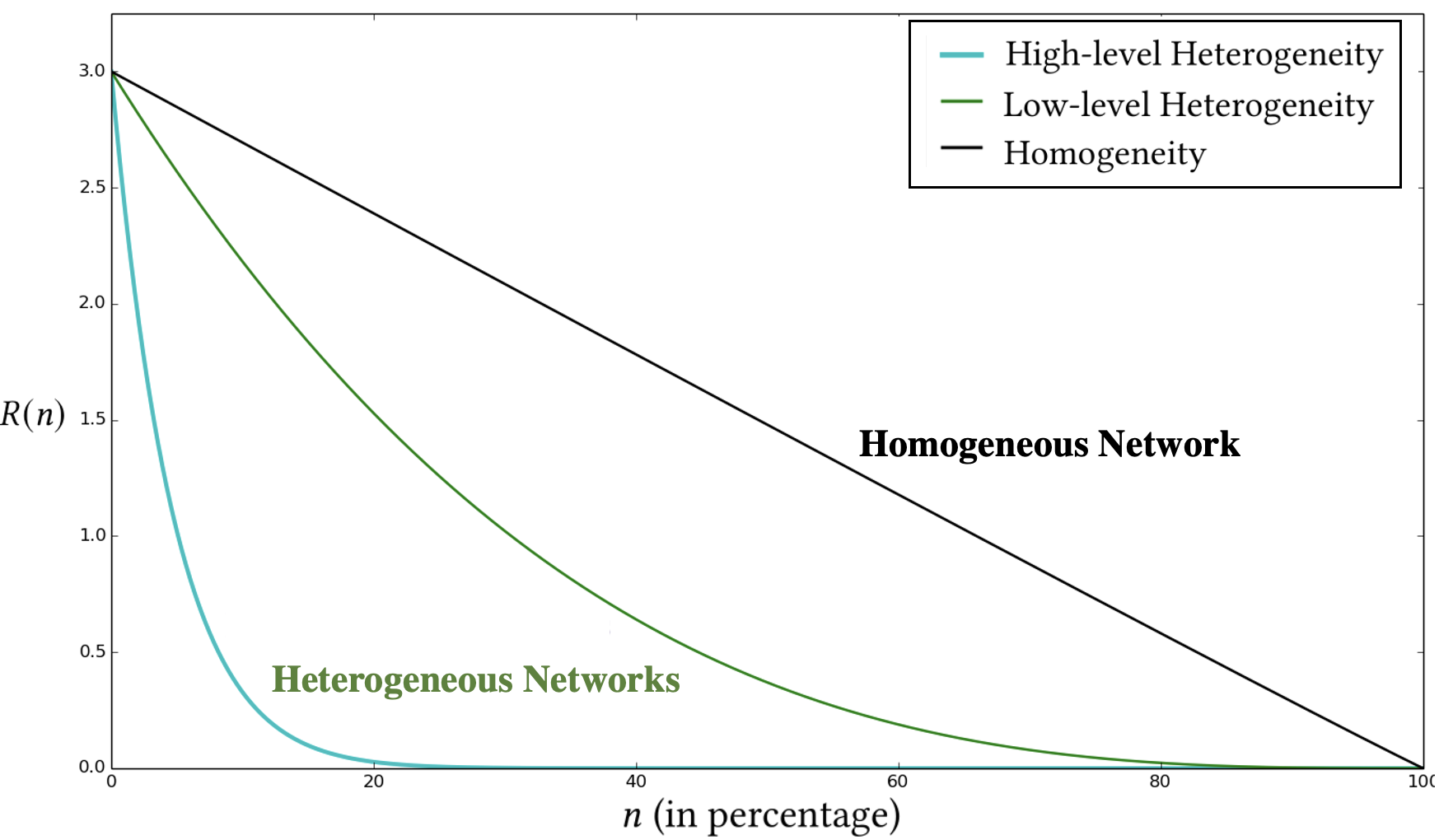}
	\caption{Decay of the effective reproduction number $R(n)$, as a function of $n$, for various heterogeneity levels.}
	\label{Fig:DiffCVR}
\end{figure}

{Having established the convexity property, we then use it to address operational problems and discuss the effect of the convexity property on infectious disease countermeasure policies. We also examine the sensitivity of the HIT to the heterogeneity level and to the chosen policy.}

We begin with the problem of vaccines allocation, where a limited number of vaccines need to be allocated to different regions (i.e., countries), such that the total number of infected individuals prior to reaching herd immunity will be minimized. 
We demonstrate that heterogeneity, and the convexity of $R$, affect the outcome of such vaccination processes as allocations which disregard heterogeneity and conduct optimization by allocating the vaccines only according to other parameters (i.e., the population sizes and basic reproduction number) might perform significantly worse comparing to the heterogeneity accounting optimizations.
{In addition, we use numerical simulations and show that minimization of the HIT is achieved by \textit{advancing} the vaccine administration timing, namely by giving them as early as possible}.

{We address lockdowns as well.}
It was recently revealed \cite{tavori2021super} that due to the effect of super-spreading in pandemics, lockdowns are very sensitive to heterogeneity and might be (perhaps contrary to naive intuition) even counter productive and \textit{increase} the HIT, if heterogeneity is not properly taken into account.
{Our results imply that any restriction on social interaction which will transform the network into a "semi-homogeneous" network will increase the HIT comparing to the natural spread of the disease.}

\section{The Model}
{We adopt a model} \cite{oz2021heterogeneity} that intrinsically accounts for heterogeneity of infectiousness and susceptibility. Such heterogeneity has been recognized and studied for quite a while; see e.g.,
\cite{rock2014dynamics, smith2005, lloyd2001viruses}. 
The spreading process follows the  \textit{Susceptible-Infective-Recovered} (SIR) model, a standard model of epidemic processes \cite{lloyd2001viruses,pastor2015epidemic,newman2003structure}, with the extension of \textit{Vaccinations}.

Let $A_0 = \{a_1, a_2, \dots, a_{N_0}\}$ be the set of the network's nodes (i.e., \textit{population}), where $\vert A_0 \vert = N_0$.
Our analysis begins with a certain number of infective nodes, where all others are assumed to be initially susceptible (\textbf{S}). 
As a result of an \textit{infection}, susceptible nodes become infective (\textbf{I}). 
After an infectious period, infected nodes "leave the game" (\textbf{R}), i.e., {stop spreading the virus} or being removed.
Additionally, susceptible nodes can move directly to being vaccinated (\textbf{V}), means that they can not get infected or infect others.

The spread across the network is indexed as a function of the number of nodes which contracted it. Namely, the event whereby the $n$th infection occurs is called the $n$th step of the disease. 
Thus, the number of steps is upper-bounded by $N_0$. 

\subsection{Stochastic Spreading Functions}

Each node $a$ is assigned with susceptibility and infectiousness parameters, $S(a)$ and $I(a)$; The values of $S$ and $I$ are in the range $[0, 1]$. These values accompany the individual $a$ throughout the entire process. 
$I(a_i)$ represents the probability of $a_i$ to spread the disease to others when it is in Infective mode. $S(a_j)$ represents the probability of $a_j$ to attract the infection. 
Assuming that $a_i$ is infective, the probability that it will infect the susceptible node $a_j$ is:
$$
\Pr[\text{$a_i$ infects $a_j$}] = I(a_i) \cdot S(a_j).
$$
Note that $S(a)$ and $I(a)$ implicitly include the social interaction level of $a$ (probability of meeting other individuals) as well as any personal or physical properties (such as biological features, {social distancing obedience} or others).

Throughout our analysis we will assume that no individual infects \textit{directly} a {significant} fraction of the population (while allowing heterogeneity of the spreading parameters)
%, a common assumption in the epidemics literature 
\cite{britton2020mathematical, smith2005, oz2021heterogeneity}. 

The ensemble of the values over all $a \in A_0$ forms the distributions $\mathcal{S}_p$ and $\mathcal{I}_p$
\footnote{Practically -- these values can be produced by sampling a population. {Examples of how they were estimated can be found in} \cite{endo2020estimating} and \cite{smith2005}. If one wants to attribute them to a specific theoretical distribution, say Gamma (,) one may draw each of the values from that distribution and obtain $\mathcal{S}_p$ and $\mathcal{I}_p$ that approximate the theoretical Gamma very well (especially in practical situations where population sizes are in millions). }. $\mathcal{S}_p$ and $\mathcal{I}_p$ denote the \textit{susceptibility and infectiousness distributions} of the population (network). $ \text{Supp} (\mathcal{S}_p) = \text{Supp}(\mathcal{I}_p) = [0,1]$.

\noindent
\subsubsection*{Correlation of  susceptibility and infectiousness}
In many cases it is logical to assume that the susceptibility and {infectiousness} of each individual $a$ are equal, as susceptibility and infectiousness levels are both proportional to the level of interaction or other properties of $a$.
Yet, the results developed in this work are based on a significantly lighter assumption --  \textit{monotonicity of spreading}:  more susceptible individuals are stochastically more infectiousness. 
Formally, denote by:
\begin{notation}[Expected Conditional Infectiousness]
\begin{equation}
	\varphi(s) := \mathop{\mathbb{E}} _{a \in A_0}  \left[ I(a) \: \vert \: S(a) = s \right].
\end{equation}
\end{notation}
We say that $\mathcal{S}$ and $\mathcal{I}$ possesses {\textit{correlated monotonicity}
if their sampling is positively correlated such that $\varphi(s)$ is monotonically non-decreasing in $s$. 
Throughout this work, we assume that $\mathcal{S}$ and $\mathcal{I}$ possesses correlated monotonicity. Of course, the case of equality $I(a)=S(a)$ for any $a \in A_0$ is a special case of correlated monotonicity. The case where the infectiousness is independent of susceptibility (i.e., $\varphi(s_1) = \varphi(s_2)$ for any $s_1 \neq s_2$) is a special case as well.
	
\subsubsection*{Tracking the Spreading Distribution}
The {distribution} of the susceptibility and infectiousness parameters across the susceptible population may change throughout the spread, as infected/vaccinated nodes are removed from the (susceptible) population.

Recall that we index the spread as a function of the number of nodes which contracted the virus (i.e., the $n$th infection case is called the $n$th \textit{step}).
Let $\textbf{A}_n$ denote the susceptible population at step $n$ (a subset of the population at step $n-1$). $\textbf{A}_n$ is a random variable distributed over all possible scenarios of infection. 

\begin{notation}[Susceptibility Density at Step $n$]
		We denote by $\rho(\cdot, n)$  the {Probability Density Function (pdf)} of the susceptibility {of $\textbf{A}_n$}, the susceptible population at step $n$. Namely, if an individual (node) $a$ is picked \textit{at random} from the susceptible population at step $n$, its susceptibility $S(a)$ is distributed according to $\rho (\cdot,n)$.
\end{notation}
Note that $\rho(\cdot,0)$ is {the pdf of} the susceptibility of $A_0$, namely the pdf of $\mathcal{S}$. This implies that the distribution is a discrete distribution (with support equaling the set formed by {the} $N_0$ values drawn initially).  The support of $\rho(\cdot,n)$ is a subset of that of $\rho(\cdot,0)$, following the epidemic process (at which node are removed from the susceptible population).

\subsection{The Reproduction Number}

The \textit{basic reproduction number}, $R_0$, is a measure of how transferable a disease is. It is defined as the expected number of secondary cases produced by a single (typical) infection in a completely susceptible population (of size $N_0$).

As the spread continues, varying proportions of the population are recovered or removed (or vaccinated) at any given time. 
Hence, we will measure the \textit{effective reproduction number}, $R(n)$, which is defined as the expected number of infections directly generated by the $n$th infected individual
(e.g., $R_0 = R(0)$). 
Formally,
$$
R(n) = \mathbb{E} \left[ \text{Number of infections generated by the $n$th infected individual} \right] .
$$
where the expectation is taken over all individuals and all possible scenarios of 
infections.

As long as $R$ obeys $R > 1$, then the number of infections increases; When $R < 1$, the number of infection cases decreases, and the spreading process comes to an end (i.e., herd immunity is achieved).
The fraction of the population that contracted the spread before $R$ reaches $1$ is called the \textit{Herd Immunity Threshold} (or HIT).

\section{Convexity of the Effective reproduction Number}\label{secrconvex}

In this section we track the evolution of the effective reproduction number throughout a natural evolution (i.e., no vaccinations) of the spreading process.
Intuitively, under the SIR epidemic process $R(n)$ is continuously decreasing as infected nodes move from state \textbf{S} to state \textbf{I}, and then to state \textbf{R}, and the size of the susceptible population decreases.

In Subsection \ref{subsec:homog} we discuss homogeneous populations, and show that the decay of $R()$ in such networks is linear. In Subsection \ref{subsec:hetero} we treat the general case of arbitrary heterogeneous populations, and prove \textit{convexity}.
{Note} that prior studies observed that the decrease of $R(n)$ in heterogeneous populations is not linear,
since the pdf of the susceptibility, $\rho(s,n)$, may change throughout the spreading process. Yet, {convexity} was not established before. 
Formally, we will prove the following theorem:
\begin{theorem}\label{thm:rconvex}
	$R(n)$ is convex in $n$. That is, for any $n$:
	$$  R(n+1) - R(n+2) \le R(n) - R(n + 1).$$
\end{theorem}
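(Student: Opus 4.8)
The plan is to reduce the second‑difference inequality to a \emph{stochastic‑dominance} relation between the susceptibilities of the successively infected nodes. Write $b_k$ for the $k$th infected node, $\mathbf{A}_n$ for the susceptible population after step $n$, and put $\Sigma_n := \sum_{a\in\mathbf{A}_n} S(a)$. The first step is to put $R(n)$ into a telescoping form: using the product rule $\Pr[a_i\text{ infects }a_j]=I(a_i)S(a_j)$ together with the hypothesis that no node infects a significant fraction of the population (so the susceptible set is essentially unchanged while a single node is infective), one gets $R(n)=\mathbb{E}[\,I(b_n)\,\Sigma_n\,]$. Conditioning on $\mathbf{A}_{n-1}$, using that $b_n$ is drawn from $\mathbf{A}_{n-1}$ with probability proportional to $S$ and that $\Sigma_n=\Sigma_{n-1}-S(b_n)$, a short computation collapses the first difference to $R(n)-R(n+1)=\mathbb{E}[\,S(b_n)\,I(b_n)\,]$. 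So the theorem becomes the statement $\mathbb{E}[\,S(b_{n+1})I(b_{n+1})\,]\le\mathbb{E}[\,S(b_n)I(b_n)\,]$: the $(n{+}1)$st infected node carries, on average, a smaller susceptibility–infectiousness product than the $n$th.

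Next I would eliminate the infectiousness coordinate using correlated monotonicity. The key observation is that the \emph{infection order} $(b_1,b_2,\dots)$ is distributed exactly as weighted sampling without replacement from the initial susceptible population with weights $S(\cdot)$: at each infection event the infectee is chosen proportionally to $S$ among the remaining susceptibles, and the total infectiousness of the currently active infectors is a common factor in every susceptible node's selection probability, hence drops out and does not influence the order. Consequently nodes with a common susceptibility value are exchangeable with respect to being $b_n$, so $\mathbb{E}[\,I(b_n)\mid S(b_n)=s\,]=\varphi(s)$, and $R(n)-R(n+1)=\mathbb{E}[\,\psi(S(b_n))\,]$ with $\psi(s):=s\,\varphi(s)$. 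Since $\varphi$ is non‑decreasing --- this is exactly where heterogeneity together with correlated monotonicity enters; in a homogeneous population $S(b_n)$ is a constant and $\psi(S(b_n))$ is the same at every step, recovering the linear decay --- the function $\psi$ is non‑negative and non‑decreasing. It therefore suffices to prove that $S(b_{n+1})$ is stochastically dominated by $S(b_n)$, i.e. $\Pr[S(b_{n+1})\ge t]\le\Pr[S(b_n)\ge t]$ for all $t$; monotonicity of $\psi$ then gives $\mathbb{E}[\psi(S(b_{n+1}))]\le\mathbb{E}[\psi(S(b_n))]$, which is the claim.

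For the dominance step I would condition on $\mathbf{A}_{n-1}$, so that $b_n$ and $b_{n+1}$ become the first and second draws of weighted sampling without replacement from the \emph{fixed} urn $\mathbf{A}_{n-1}$ with weights equal to the susceptibilities. What remains is the purely combinatorial fact that in such a sampling the weight of the first draw stochastically dominates the weight of the second draw. Fixing a threshold $t$ and splitting $\mathbf{A}_{n-1}$ into ``heavy'' nodes ($S(a)\ge t$) and ``light'' nodes ($S(a)<t$), one expands $\Pr[S(b_n)\ge t]-\Pr[S(b_{n+1})\ge t]$ over the choice of $b_n$, clears denominators, and verifies the sign using the monotonicity of $s\mapsto s/(\Sigma_{n-1}-s)$; averaging over the randomness of $\mathbf{A}_{n-1}$ then yields the unconditional dominance.

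I expect the main obstacle to be this last dominance step: it genuinely needs the weight‑equals‑value coupling and does \emph{not} follow from the trivial inclusion $\mathbf{A}_{n+1}\subseteq\mathbf{A}_n$, and before invoking it one must carefully check that the temporal SIR bookkeeping (infectious periods, several simultaneously active infectors, recoveries) leaves the order statistics of the infectees' susceptibilities intact, so that the infection order really is weighted sampling without replacement. A secondary delicate point is the opening reduction, where the ``no node infects a significant fraction'' hypothesis is what controls the error terms in passing from ``expected number of secondary infections of the $n$th node'' to the clean identity $R(n)-R(n+1)=\mathbb{E}[S(b_n)I(b_n)]$.
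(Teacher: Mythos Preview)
Your argument is correct and runs parallel to the paper's, landing on the same pivotal identity $R(n)-R(n+1)=\mathbb{E}[\psi(S(b_n))]$ with $\psi(s)=s\,\varphi(s)$; this is exactly the paper's Eq.~(\ref{eq:rdiff2}), which the paper obtains from Lemma~\ref{lem:rdef} by writing $R(n)\approx\sum_{a\in\mathbf A_n}S(a)\varphi(S(a))$ and subtracting the contribution of the removed node. (A caution: your telescoping from $R(n)=\mathbb{E}[I(b_n)\Sigma_n]$ does \emph{not} literally collapse to $\mathbb{E}[S(b_n)I(b_n)]$ as written---the cross term $\mathbb{E}[(I(b_n)-I(b_{n+1}))\Sigma_n]$ does not vanish---so you should instead start from the paper's sum form of $R(n)$, where the difference really is the removed node's term; the identity you are aiming for is correct, only the derivation needs rerouting.) Where the two proofs genuinely diverge is in showing that $\mathbb{E}[\psi(S(b_n))]$ is non-increasing in $n$. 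The paper works analytically: it proves the monotone likelihood ratio property for $(\rho(\cdot,n),\rho(\cdot,n+1))$ (Lemma~\ref{lem:ratiodec}) and then invokes Wijsman's integral inequality (Theorem~\ref{thm:intratio}) with $f_i=\rho(\cdot,n{+}i{-}1)$, $g_1=s^2\varphi(s)$, $g_2=s$. You instead argue probabilistically: the infection order is weighted sampling without replacement with weights $S(\cdot)$, and in such sampling the weight of the first draw first-order stochastically dominates that of the second, whence $S(b_n)\gtrapprox_{FSD}S(b_{n+1})$ and the monotone $\psi$ transfers this to expectations. The two routes are equivalent in strength---the paper's MLRP for $\rho$ immediately yields MLRP, hence FSD, for the size-biased densities $\tilde\rho$, which is precisely your dominance claim---but your packaging is more self-contained (no external inequality cited) and makes the mechanism ``high-$S$ nodes are drawn earlier'' explicit, whereas the paper compresses everything into a single density-ratio inequality.
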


\subsection{Homogeneous Populations}\label{subsec:homog}
In homogeneous populations, the susceptibility and infectiousness parameters are identical across the individuals of the  population. I.e., there exists values $\sigma$ and $\iota$ such that for any node $v \in A_0$, $S(v)=\sigma$ and $I(v)=\iota$. Hence, since all the nodes are identical, the probability of each node to be the $n$th infected is $\frac{1}{\vert A_0 \vert} = \frac{1}{N_0}$. 
Therefore,
\begin{equation}\label{eq:rhomod}
	R(n) = \sum_{v \in A_0}   \frac{1}{N_0} \cdot I(v) \cdot \mathbb{E} \left[ \sum_{u \in \textbf{A}_n} S(u)\right] =
	R_0 \cdot \frac{N_0 - n}{N_0}.
\end{equation}

Hence for any $n$, $R(n+1) - R(n)  = R(n+2) - R(n+1)$, and Theorem \ref{thm:rconvex} holds.

\subsection{Heterogeneous Populations}\label{subsec:hetero}

It was proved in \cite{oz2021heterogeneity} that in heterogeneous networks, it holds that:
\begin{lemma}\label{lem:rdef}
	The {value} of the effective reproduction number at step $n$ can be approximated by
	\begin{equation}\label{eq:rdef}
		R(n) \approx
		 (N_0 - n) \cdot \int \rho(s,n) \cdot s  \cdot \varphi(s) \; ds .\footnote{{Note that since $\rho(s,n)$ is in fact discrete (as its support is a finite set of value), one should technically treat the notation ($\int$) as the coresponding sum}.}
	\end{equation}
\end{lemma}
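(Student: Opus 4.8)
The plan is to derive the formula directly from the definition of $R(n)$ as the expected number of secondary infections caused by the $n$th infected node, by factoring that expectation into two pieces: the \emph{infectiousness} of the infecting ($n$th) node, and the total \emph{susceptibility} available in the remaining susceptible pool $\textbf{A}_n$. Writing $w$ for the (random) $n$th infected node and recalling that it infects each still-susceptible $u \in \textbf{A}_n$ with probability $I(w)\cdot S(u)$, linearity of expectation gives
\begin{equation*}
	R(n) = \mathbb{E}\!\left[\, I(w) \sum_{u \in \textbf{A}_n} S(u) \right].
\end{equation*}
Under the standing assumption that no single node infects a significant fraction of the population (so that the pool is large and no spreader dominates), $I(w)$ and the aggregate susceptibility $\sum_{u\in\textbf{A}_n} S(u)$ are asymptotically uncorrelated, which is precisely what the ``$\approx$'' in the statement absorbs; I would therefore factor the expectation as a product of the two marginal expectations and treat each separately.

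The second factor is immediate: $\textbf{A}_n$ contains $N_0-n$ nodes whose susceptibilities are, by definition of $\rho(\cdot,n)$, distributed according to $\rho(\cdot,n)$, so
\begin{equation*}
	\mathbb{E}\!\left[\sum_{u\in\textbf{A}_n} S(u)\right] = (N_0-n)\int \rho(s,n)\, s \; ds.
\end{equation*}
The first factor is the crux, and it is where heterogeneity and a selection bias enter. A node is infected with probability proportional to its susceptibility, so the $n$th infected node $w$ is \emph{not} a uniform sample of the pool but a susceptibility-size-biased one: its susceptibility has density proportional to $s\,\rho(s,n)$. Combining this with the definition of the expected conditional infectiousness $\varphi$ (conditioning on $S(w)=s$ and applying the tower rule), the expected infectiousness of the infecting node is the susceptibility-weighted average
\begin{equation*}
	\mathbb{E}[I(w)] = \frac{\int \rho(s,n)\, s\, \varphi(s)\; ds}{\int \rho(s,n)\, s\; ds}.
\end{equation*}

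Multiplying the two factors, the normalizing integral $\int \rho(s,n)\, s\; ds$ cancels exactly, leaving $R(n) \approx (N_0-n)\int \rho(s,n)\, s\, \varphi(s)\; ds$, as claimed. I expect the main obstacle to be making the size-biased selection rigorous: one must argue that the law of $S(w)$ has density proportional to $s\,\rho(s,n)$, which technically refers to the pool $\textbf{A}_{n-1}$ just \emph{before} the infection rather than to $\textbf{A}_n$, and that replacing $\textbf{A}_{n-1}$ by $\textbf{A}_n$ (an off-by-one in the pool) together with the factoring of the product of expectations each introduce only lower-order error. All of these discrepancies are controlled by the no-dominant-spreader / large-population hypothesis, and quantifying that they vanish as $N_0 \to \infty$ is the step demanding the most care.
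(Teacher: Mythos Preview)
The paper does not actually prove this lemma; it is quoted from \cite{oz2021heterogeneity}, and the only in-paper commentary is Remark~\ref{rmrk:eq11}, which identifies the source of the ``$\approx$'' as the assumption that the $n$th infected node may ``infect itself'' (i.e., is counted in its own susceptible pool), with relative error $O\big(\max_a S(a)/\sum_b S(b)\big)$. Your derivation is correct and lands on exactly the same formula through the same two ingredients --- size-biased selection of the infecting node and the total susceptibility of the pool --- so it is entirely consistent with what the paper records about the original proof.

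The one noteworthy difference is how you package the approximation. You invoke an ``asymptotic uncorrelation'' of $I(w)$ and $\sum_{u\in\textbf{A}_n}S(u)$ to factor the expectation, and then separately flag the off-by-one between $\textbf{A}_{n-1}$ and $\textbf{A}_n$. The paper's self-infection device accomplishes both at once and more cleanly: replacing $\sum_{u\in\textbf{A}_n}S(u)$ by $\sum_{u\in\textbf{A}_{n-1}}S(u)$ makes the latter a constant once $\textbf{A}_{n-1}$ is fixed, so conditionally the factoring $\mathbb{E}[I(w)\,|\,\textbf{A}_{n-1}]\cdot\sum_{u\in\textbf{A}_{n-1}}S(u)=\sum_{a\in\textbf{A}_{n-1}}S(a)I(a)$ is \emph{exact}, and the only error is the added self-term. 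This is why the paper can later say (in the proof of Theorem~\ref{thm:rconvex}) that ``the contribution to $R(n)$ of a susceptible node $a$ is approximately $S(a)\cdot I(a)$.'' Your route is equivalent up to the same lower-order correction, but if you want to match the paper's stated error bound you should phrase the approximation as self-infection rather than as a covariance estimate.
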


\begin{remark}[{Accuracy} of the approximation in Eq. (\ref{eq:rdef})]\label{rmrk:eq11}
	The approximation was based on assuming that one may infect itself (and adding this event to the count of expected infections, in the derivation of Eq. (\ref{eq:rdef})). Hence the error is given by the probability of such an event, yielding a bound on the relative error  by $O(\frac{\max_a S(a)}{\sum_b S(b)})$.
	Recall that we assume that no individual infects a significant fraction of the population, 
{as was assumed in} \cite{oz2021heterogeneity}, {implying that $O(\frac{\max_a S(a)}{\sum_b S(b)})$ is negligible}.
\end{remark}

We start by stating a relationship between $\rho(s,n)$ and $\rho(s,n+1)$. 

\begin{definition} (see \cite{muller2002comparison, wolfstetter1993stochastic})
	The pair of {probability density functions} $(\rho_1(s ,n), \rho_2(s , n))$ possesses the monotone likelihood ratio property (MLRP) if 
	$	\frac{\rho_1(s, n)}{\rho_2(s, n)}$
	is non-decreasing in $s$.
\end{definition}

\begin{lemma}\label{lem:ratiodec}
For any $n$, the pair $(\rho(s,n), \rho(s, n+1))$ possesses the monotone likelihood ratio property
\end{lemma}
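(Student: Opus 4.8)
The plan is to show that the transition from step $n$ to step $n+1$ removes susceptible nodes in a way that is biased toward higher susceptibility values, which is exactly the statement that the likelihood ratio $\rho(s,n)/\rho(s,n+1)$ is non-decreasing in $s$. First I would write down explicitly how $\rho(\cdot, n+1)$ arises from $\rho(\cdot, n)$: conditioned on the susceptible population $\mathbf{A}_n$ at step $n$, the $(n+1)$th infection falls on node $a \in \mathbf{A}_n$ with probability proportional to $S(a)$ (since, by the infection rule $\Pr[a_i \text{ infects } a_j] = I(a_i)S(a_j)$ and the approximation underlying Lemma~\ref{lem:rdef}, the probability a given susceptible node is the next one infected is proportional to its susceptibility, independent of the identity of the infector). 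Hence, at the level of densities, $\mathbf{A}_{n+1}$ is obtained from $\mathbf{A}_n$ by deleting one node size-biased by $s$.

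The key computation is then the following. Let me condition on a fixed realization of $\mathbf{A}_n$ with empirical susceptibility density $\rho$; the density after size-biased deletion of one node is proportional to $\rho(s)\bigl(1 - s/\Sigma\bigr)$ where $\Sigma = \sum_{a\in\mathbf{A}_n} S(a)$ is the total susceptibility, since a node of susceptibility $s$ survives with probability $1 - s/\Sigma$. Thus, pointwise in the conditioning,
$$
\frac{\rho(s,n \mid \mathbf{A}_n)}{\rho(s,n+1 \mid \mathbf{A}_n)} \;\propto\; \frac{1}{1 - s/\Sigma},
$$
which is manifestly non-decreasing in $s$ on $[0,1)$ (the denominator decreases in $s$). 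So the MLRP holds for every fixed $\mathbf{A}_n$. The remaining step is to lift this from the conditional densities to the unconditional densities $\rho(s,n)$ and $\rho(s,n+1)$, i.e. to the mixtures over all realizations of $\mathbf{A}_n$. For this I would either (i) invoke the fact that $\rho(\cdot, n)$ is itself a mixture of the conditional densities over $\mathbf{A}_n$ and check that the mixing weights for $\rho(\cdot,n+1)$ are obtained from those for $\rho(\cdot,n)$ in a way that preserves the ratio ordering, or (ii) set up an induction on $n$: assuming $\rho(\cdot,n)$ has a certain structural form, each one-step size-biased deletion keeps MLRP with the previous density, using the closure of the MLRP order under mixtures of appropriately ordered families.

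The main obstacle I anticipate is precisely this passage from the conditional to the unconditional statement: MLRP is not preserved under arbitrary mixtures, so one must be careful that the two mixtures defining $\rho(\cdot,n)$ and $\rho(\cdot,n+1)$ are coupled through the \emph{same} randomness (the history of infections up to step $n$), with the only extra ingredient being one further size-biased deletion applied to each history. I would handle this by conditioning on the entire infection history through step $n$ — which determines $\mathbf{A}_n$ exactly — proving the pointwise ratio bound $\rho(s\mid\text{hist})/\rho'(s\mid\text{hist}) \propto 1/(1-s/\Sigma)$ for each history, and then noting that averaging the numerator and denominator over the \emph{common} history distribution preserves the monotonicity of the ratio because the family $\{s \mapsto 1/(1 - s/\Sigma)\}_\Sigma$ is totally ordered (pointwise) in $\Sigma$ — a standard fact about mixtures of MLR-ordered densities. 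Finally I would remark that this lemma is exactly the tool needed to derive the stochastic monotonicity between $\mathbf{A}_n$ and $\mathbf{A}_{n+1}$ that drives the convexity proof of Theorem~\ref{thm:rconvex}.
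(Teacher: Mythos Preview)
Your core computation is exactly what the paper does: fix a realization $A_n$, observe that the next infection hits a node of susceptibility $s$ with probability $s/\Sigma$ where $\Sigma=\sum_{u\in A_n}S(u)$, deduce that the survival factor is $P_n(s)=1-s/\Sigma$, and hence
\[
\rho(s,n+1)\;=\;\frac{\rho(s,n)\,P_n(s)}{\int \rho(\sigma,n)\,P_n(\sigma)\,d\sigma},
\]
so that $\rho(s,n+1)/\rho(s,n)\propto 1-s/\Sigma$ is non-increasing in $s$. That is the entire content of the paper's proof; the paper stops there and does not raise the conditional-versus-unconditional issue at all. In effect the paper treats $\rho(\cdot,n)$ pathwise (given $A_n$), which is consistent with how the quantity is used downstream in Lemma~\ref{lem:rratiodec} and Theorem~\ref{thm:rconvex}, and with the large-population approximation of Remark~\ref{rmrk:eq11} under which $\Sigma$ is essentially deterministic.

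Where your proposal goes beyond the paper, it also acquires a gap. You correctly flag that MLRP is not closed under arbitrary mixtures, but your proposed fix---``the family $\{s\mapsto 1/(1-s/\Sigma)\}_\Sigma$ is totally ordered in $\Sigma$, so averaging numerator and denominator over the common history distribution preserves monotonicity''---does not go through. Writing $\rho(s,n)=\sum_h w_h\,\rho(s,n\mid h)$ and $\rho(s,n+1)=\sum_h w_h\,\rho(s,n\mid h)\,(1-s/\Sigma_h)/c_h$, the ratio $\rho(s,n+1)/\rho(s,n)$ is a weighted average of the decreasing functions $(1-s/\Sigma_h)/c_h$ with \emph{$s$-dependent weights} proportional to $\rho(s,n\mid h)$. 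Pointwise total ordering of the integrands in $\Sigma$ says nothing about how these weights reshuffle with $s$, and in general such a weighted average of decreasing functions need not be decreasing. To close this honestly you would need either a genuine MLR ordering among the conditional densities $\rho(\cdot,n\mid h)$ themselves (which you have not established), or simply to adopt the paper's stance: prove the lemma conditionally on $A_n$ and carry that conditioning through the rest of the argument, or invoke the approximation that $\Sigma$ is deterministic to the working accuracy.
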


\begin{proof}
Let $A_n$ be a realization of the susceptible population at step $n$. 
We start by deriving
$\Pr[v \in \textbf{A}_{n+1} \; \vert \; v \in A_n]$,
where $\textbf{A}_{n+1}$ is a random variable distributed over al the possible susceptible populations at step $n+1$, given $A_n$. I.e., according to the infection probability.

Since the likelihood of getting infected is proportional to the susceptibility value of the node, it holds that the probability that $v \in A_n$ is the next infected is
\begin{equation}\label{eq:eq10}
	\frac{S(v)}{\sum_{u \in A_n} S(u)}.
\end{equation}
Therefore,
\begin{equation}\label{eq:An1ratioAn}
	P_n(s) :=
	\Pr[v \in \textbf{A}_{n+1} \; \vert \; v \in A_n \text{ and } S(v) = s] 
	= \left(1 - \frac{s}{\sum_{u \in A_n} S(u)} \right).
\end{equation}
Recall that  $\rho(s,n)$ is the density function of the susceptibility of the (susceptible) population at step $n$. In other words, if a node $u$ is picked \textit{at random} at step $n$, its susceptibility is distributed according to $\rho (s,n)$. 
Eq. (\ref{eq:An1ratioAn}) can be used to develop the ratio between $\rho(s,n)$  and $\rho(s,n+1)$:
\begin{equation}\label{eq:rhorationidea}
	\rho(s, n + 1)
	= \frac{ \rho(s,n)  P_n(s) }{ \int \rho(\sigma , n) P_n(\sigma)  d\sigma}.
\end{equation}
According to Eq. (\ref{eq:An1ratioAn}), $P_n(s)$ is monotonically decreasing in $s$.
Hence, by Eq. (\ref{eq:rhorationidea}) the ratio $\frac{\rho(s,n+1)}{\rho(s,n)}$ is decreasing in $s$. Therefore, the ratio $\frac{\rho(s,n)}{\rho(s,n+1)}$ is non-decreasing, and we conclude the proof.
\end{proof}

\begin{figure}[h]
	\centering
	\includegraphics[width=0.7\linewidth]{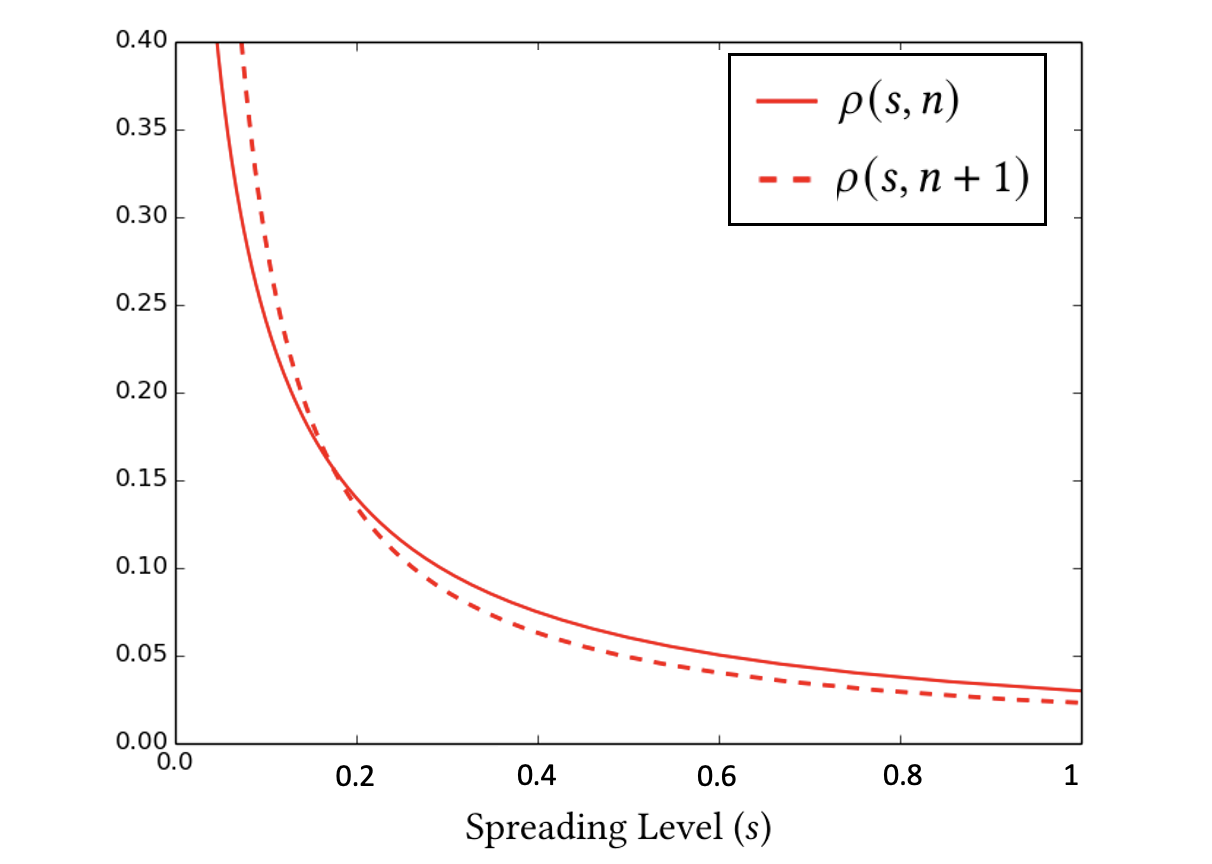}
	\caption{Demonstration of the relation between $\rho(s,n)$ and $\rho(s,n+1)$ as a function of $s$.}
	\label{Fig:mlrp}
\end{figure}

The relationship established in Lemma \ref{lem:ratiodec} is demonstrated in Figure \ref{Fig:mlrp}.
Using the lemma {and the following known theorems we prove Lemma} \ref{lem:rratiodec} and Theorem \ref{thm:rconvex}. 
%(which will be used in the analysis of Section 5 and Section 4, respectively).%

\begin{definition} (see \cite{wolfstetter1993stochastic})
		{Let} $\textbf{X}_1, \textbf{X}_2$, be random variables. $\textbf{X}_1$ (first order) stochastically dominates $\textbf{X}_2$ (denoted by $\textbf{X}_1 \gtrapprox_{FSD} \textbf{X}_2$) if $\; \Pr[X_1 > z] \ge \Pr[X_2 > z]$ for all $z$. 
\end{definition}

\begin{theorem}(From \cite{wolfstetter1993stochastic})\label{thn:mlrpsausd}
	{Let} $\textbf{X}_1, \textbf{X}_2$, be random variables. Let $\rho_1(),  \rho_2()$ be their pdfs.
	If the pair $(\rho_1(), \rho_2())$ possesses the monotone likelihood ratio property then $\textbf{X}_1$ (first order) stochastically dominates $\textbf{X}_2$ as follows: $\textbf{X}_1 \gtrapprox_{FSD} \textbf{X}_2$. \cite{hadar1971stochastic,wolfstetter1993stochastic}. 
\end{theorem}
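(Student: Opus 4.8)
The plan is to prove the implication directly by recasting the monotone likelihood ratio hypothesis as a cross-product (total-positivity) inequality and integrating it over a well-chosen rectangle. Throughout, write $\bar F_i(z) := \Pr[\textbf{X}_i > z] = \int_z^\infty \rho_i(s)\,ds$ and $F_i(z) := 1 - \bar F_i(z)$, so that the desired conclusion $\textbf{X}_1 \gtrapprox_{FSD} \textbf{X}_2$ is, verbatim, the statement that $\bar F_1(z) \ge \bar F_2(z)$ for every $z$.

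First I would restate the MLRP assumption in product form. Since $\rho_1(s)/\rho_2(s)$ is non-decreasing, for every ordered pair $t \le s$ one has
$$ \rho_1(t)\,\rho_2(s) \;\le\; \rho_1(s)\,\rho_2(t). $$
This is the form that stays meaningful even at points where $\rho_2$ vanishes, and it is equivalent to monotonicity of the ratio wherever the ratio is finite. Next, fixing $z$, I would integrate this pointwise inequality over the region $t \le z \le s$, i.e. over $t \in (-\infty, z]$ and $s \in [z, \infty)$. Every pair in this region satisfies $t \le s$, so the inequality survives integration, and because the variables separate the two double integrals factor, giving
$$ \Big(\int_{-\infty}^z \rho_1(t)\,dt\Big)\Big(\int_z^\infty \rho_2(s)\,ds\Big) \;\le\; \Big(\int_z^\infty \rho_1(s)\,ds\Big)\Big(\int_{-\infty}^z \rho_2(t)\,dt\Big), $$
that is, $F_1(z)\,\bar F_2(z) \le \bar F_1(z)\,F_2(z)$.

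Finally I would substitute $F_i = 1 - \bar F_i$ and simplify. Expanding both sides, the common term $\bar F_1(z)\,\bar F_2(z)$ cancels, leaving $\bar F_2(z) \le \bar F_1(z)$, which is exactly first-order stochastic dominance. This last step is pure algebra and needs no non-degeneracy hypothesis on the tails, so it holds for every $z$ uniformly.

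The step I expect to require the most care is the behavior of the likelihood ratio at points where $\rho_2(s) = 0$ (where the ratio is read as $+\infty$): the product form of the MLRP is precisely what lets me avoid dividing by zero, and monotonicity forces any such points to form an upper set, so the integrated inequality is unaffected. A secondary point worth a sentence is that in this paper's application the densities $\rho(\cdot,n)$ are discrete, supported on a finite set of values; the entire argument transfers verbatim with each $\int$ replaced by the corresponding $\sum$, since both the cross-product inequality and the rectangle summation are indifferent to whether the underlying measure is continuous or atomic.
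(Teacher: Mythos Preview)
Your proof is correct: the cross-product reformulation of the MLRP, the rectangle integration, and the algebraic cancellation are all standard and valid, and your remarks about zeros of $\rho_2$ and about the discrete case are apt. Note, however, that the paper does not prove this theorem at all; it is quoted as a known result from \cite{wolfstetter1993stochastic} and \cite{hadar1971stochastic} and used as a black box in the proof of Lemma~\ref{lem:rratiodec}, so there is no in-paper argument to compare against.
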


\begin{theorem}(From \cite{wolfstetter1993stochastic})\label{thn:sdsayexpec}
	{Let} $r()$ be an injective monotone function. If $\textbf{X}_1 \gtrapprox_{FSD} \textbf{X}_2$ then $\mathbb{E}\left[ r(\textbf{X}_1) \right] \ge \mathbb{E}\left[ r(\textbf{X}_2) \right]$.
\end{theorem}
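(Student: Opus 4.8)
The plan is to prove this via the standard quantile coupling, which converts the inequality between expectations into an almost-sure pointwise inequality between suitably coupled copies of $\mathbf{X}_1$ and $\mathbf{X}_2$.

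First I would restate first-order stochastic dominance in terms of cumulative distribution functions: writing $F_i(z)=\Pr[\mathbf{X}_i\le z]$, the hypothesis $\mathbf{X}_1\gtrapprox_{FSD}\mathbf{X}_2$ is exactly $F_1(z)\le F_2(z)$ for every $z$. Passing to the generalized inverses $F_i^{-1}(u)=\inf\{z:F_i(z)\ge u\}$, the pointwise inequality $F_1\le F_2$ reverses to $F_1^{-1}(u)\ge F_2^{-1}(u)$ for every $u\in(0,1)$.

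Next I would set up the coupling. Let $U$ be uniform on $(0,1)$ and put $\widetilde{\mathbf{X}}_i:=F_i^{-1}(U)$. By the inverse-CDF (probability integral) transform, $\widetilde{\mathbf{X}}_i$ has the same law as $\mathbf{X}_i$, while the previous step gives $\widetilde{\mathbf{X}}_1\ge\widetilde{\mathbf{X}}_2$ almost surely. Since $r$ is monotone (and non-decreasing in the intended application, where $r(s)=s\,\varphi(s)$ is non-decreasing by correlated monotonicity; if $r$ were non-increasing the conclusion reverses), monotonicity gives $r(\widetilde{\mathbf{X}}_1)\ge r(\widetilde{\mathbf{X}}_2)$ pointwise, and taking expectations yields $\mathbb{E}[r(\mathbf{X}_1)]=\mathbb{E}[r(\widetilde{\mathbf{X}}_1)]\ge\mathbb{E}[r(\widetilde{\mathbf{X}}_2)]=\mathbb{E}[r(\mathbf{X}_2)]$.

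An equivalent coupling-free route is to write $\mathbb{E}[r(\mathbf{X}_i)]=\int r\,dF_i$ and integrate by parts in the Lebesgue--Stieltjes sense, giving $\mathbb{E}[r(\mathbf{X}_1)]-\mathbb{E}[r(\mathbf{X}_2)]=-\int\bigl(F_1(z)-F_2(z)\bigr)\,dr(z)\ge 0$, since $F_1-F_2\le 0$ while $r$ non-decreasing means $dr\ge 0$. I do not expect a genuine obstacle here: the only points needing mild care are that $r$ need not be differentiable (so the integration by parts must be read in Stieltjes form, which the coupling argument avoids entirely) and that one needs $r(\mathbf{X}_i)$ integrable for the expectations to be defined, which is automatic in the paper's setting since the support is the compact interval $[0,1]$ and $r$ is bounded there. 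The injectivity hypothesis is not needed for the weak inequality; it would matter only for a strict version.
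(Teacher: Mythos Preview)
Your argument is correct: the quantile-coupling proof is standard and valid, and your alternative Stieltjes integration-by-parts route is also sound. You are right that injectivity is irrelevant for the weak inequality and that integrability is automatic on the compact support $[0,1]$ used here.

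There is nothing to compare against, however: the paper does not supply a proof of this statement. Theorem~\ref{thn:sdsayexpec} is quoted verbatim as a known result from \cite{wolfstetter1993stochastic} and invoked as a black box in the proof of Lemma~\ref{lem:rratiodec}. So your proposal is not an alternative to the paper's argument but rather a proof of a cited background fact that the paper takes for granted.
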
 

\begin{lemma}\label{lem:rratiodec}
{For any $0 < n_1 < n_2 < N_0$,}
$		\frac{R(n_2)}{R(n_1)} \le \frac{N_0 - n_2}{N_0 - n_1}$.
\end{lemma}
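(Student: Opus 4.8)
The plan is to reduce the claim to a monotonicity statement about a single scalar functional of the susceptibility distribution and then invoke the stochastic-dominance machinery already assembled. Using the approximation of Lemma~\ref{lem:rdef}, write $R(n) = (N_0 - n)\cdot g(n)$ where $g(n) := \int \rho(s,n)\, s\, \varphi(s)\, ds = \mathbb{E}\!\left[r(\mathbf{S}_n)\right]$, with $\mathbf{S}_n$ a random variable whose density is $\rho(\cdot, n)$ and $r(s) := s\,\varphi(s)$. Then $\frac{R(n_2)}{R(n_1)} = \frac{N_0-n_2}{N_0-n_1}\cdot\frac{g(n_2)}{g(n_1)}$, so the lemma is equivalent to showing $g(n_2) \le g(n_1)$ whenever $n_1 < n_2$; it suffices to prove $g(n+1) \le g(n)$ for every $n$ and chain the inequalities.

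First I would observe that $r(s) = s\,\varphi(s)$ is non-decreasing on $[0,1]$: $\varphi$ is non-decreasing by the correlated-monotonicity assumption and non-negative since $I(a)\in[0,1]$, and $s\mapsto s$ is non-negative and increasing, so their product is non-decreasing (indeed strictly increasing wherever $\varphi>0$, which is the relevant regime). Next, Lemma~\ref{lem:ratiodec} gives that $(\rho(s,n),\rho(s,n+1))$ has the monotone likelihood ratio property, so Theorem~\ref{thn:mlrpsausd} yields $\mathbf{S}_n \gtrapprox_{FSD} \mathbf{S}_{n+1}$, and then Theorem~\ref{thn:sdsayexpec} applied to the monotone function $r$ gives $g(n) = \mathbb{E}[r(\mathbf{S}_n)] \ge \mathbb{E}[r(\mathbf{S}_{n+1})] = g(n+1)$. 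Iterating from $n_1$ to $n_2$ gives $g(n_2)\le g(n_1)$, and since $g(n_1) = R(n_1)/(N_0-n_1) > 0$ during the spread, dividing by $g(n_1)$ yields $\frac{g(n_2)}{g(n_1)}\le 1$ and hence the claimed bound $\frac{R(n_2)}{R(n_1)} \le \frac{N_0-n_2}{N_0-n_1}$.

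The main obstacle is the bookkeeping needed to apply Theorem~\ref{thn:sdsayexpec} cleanly: that theorem is stated for an injective monotone $r$, whereas $r(s)=s\varphi(s)$ may be constant on intervals where $\varphi$ vanishes, so I would either argue directly from the definition of first-order stochastic dominance (for which $\mathbb{E}[r(\mathbf{S}_n)]\ge\mathbb{E}[r(\mathbf{S}_{n+1})]$ holds for \emph{every} non-decreasing $r$) or restrict attention to the support region where $\varphi>0$, on which $r$ is strictly increasing, noting that the complementary region contributes $0$ to both expectations. A secondary point to address is that the whole argument rests on the approximation in Lemma~\ref{lem:rdef}; as in Remark~\ref{rmrk:eq11}, the relative error is $O\!\left(\max_a S(a)/\sum_b S(b)\right)$, negligible under the standing assumption that no individual infects a significant fraction of the population, so the inequality holds up to this negligible slack.
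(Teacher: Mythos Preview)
Your proposal is correct and follows essentially the same route as the paper: factor $R(n)=(N_0-n)\int\rho(s,n)\,s\,\varphi(s)\,ds$, use Lemma~\ref{lem:ratiodec} together with Theorems~\ref{thn:mlrpsausd} and~\ref{thn:sdsayexpec} to show the integral is non-increasing in $n$, and divide. The only cosmetic difference is that the paper chains the MLRP relations first (noting that if each $\rho(\cdot,n)/\rho(\cdot,n+1)$ is non-decreasing then so is their product $\rho(\cdot,n_1)/\rho(\cdot,n_2)$) and applies stochastic dominance once, whereas you apply the dominance step-by-step and chain the resulting inequalities on $g(n)$; your extra care about the injectivity hypothesis in Theorem~\ref{thn:sdsayexpec} is a valid refinement that the paper simply elides.
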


\begin{proof}[Proof of Lemma \ref{lem:rratiodec}]
	{By applying Lemma} \ref{lem:ratiodec} recursively over $n_1, n_1 + 1, \dots, n_2$, we have that $(\rho(s,n_1), \rho(s,n_2))$ possesses the monotone likelihood ratio property.
	By Theorems \ref{thn:mlrpsausd} and \ref{thn:sdsayexpec}, since $s \cdot \varphi(s)$ is monotone increasing in $s$, we have that 
	$\int \rho(s, n_1) s \varphi(s) ds \ge \int \rho(s, n_2) s \varphi(s) ds$.
	By Eq. (\ref{eq:rdef}),
	\begin{equation}
		\frac{R(n_2)}{R(n_1)} = 
		\frac{(N_0 - n_2) \cdot \int \rho(s, n_2) s \varphi(s) ds}
		{(N_0 - n_1) \cdot \int \rho(s, n_1) s \varphi(s) ds}
		\le
		\frac{N_0 - n_2}
		{N_0 - n_1}
	\end{equation}
{and we conclude the proof.}
\end{proof}

In order to prove Theorem \ref{thm:rconvex} we will use the following known theorem:

\begin{theorem}[From \cite{wijsman1985useful}]\label{thm:intratio}
Let $f_i , g_i: \mathbb{R}\rightarrow\mathbb{R}$ ($i = 1,2$) be four functions such that $f_2 \ge0$, $ g_2 \ge 0$, and $\int \vert f_i g_j \vert du < \infty$ ($i,j=1,2$). If $f_1/f_2$ and $g_1/g_2$ are monotonic in the opposite direction, then
\begin{equation}\label{eq:clminterg}
	\int f_1 g_1 du \int f_2 g_2 du \le \int f_1 g_2 du \int f_2 g_1 du.
\end{equation}
If $f_1 / f_2 =$ constant  or $g_1 / g_2 =  $ constant then there is equality in Eq. (\ref{eq:clminterg}).
\end{theorem}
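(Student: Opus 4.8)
The plan is to reduce the two–factor inequality in Eq.~(\ref{eq:clminterg}) to a single \emph{weighted correlation inequality} of Chebyshev type and then to prove that reduced inequality by the classical antisymmetrization (doubling) trick. First I would introduce the ratios $\phi := f_1/f_2$ and $\psi := g_1/g_2$, which by hypothesis are monotone in opposite directions; without loss of generality I take $\phi$ non-decreasing and $\psi$ non-increasing. On the common support of $f_2$ and $g_2$ I may write $f_1 = \phi\, f_2$ and $g_1 = \psi\, g_2$; wherever $f_2$ or $g_2$ vanishes the corresponding ratio forces $f_1$ or $g_1$ to vanish as well, so nothing is lost. Introducing the nonnegative weight $w := f_2 g_2 \ge 0$, each of the four integrals becomes an integral against $w$: namely $\int f_1 g_2\, du = \int \phi\, w\, du$, $\int f_2 g_1\, du = \int \psi\, w\, du$, $\int f_1 g_1\, du = \int \phi\psi\, w\, du$, and $\int f_2 g_2\, du = \int w\, du$. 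Thus proving Eq.~(\ref{eq:clminterg}) is equivalent to establishing
\begin{equation*}
\Big( \int \phi\psi\, w \, du \Big) \Big( \int w \, du \Big) \;\le\; \Big( \int \phi\, w \, du \Big) \Big( \int \psi\, w \, du \Big).
\end{equation*}

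Next I would prove this weighted inequality by symmetrization. The integrability hypotheses $\int |f_i g_j|\, du < \infty$ guarantee that every product above is integrable and justify Fubini's theorem, so the difference of the two sides may be written as a double integral over $(u,v)$ and symmetrized under the swap $u \leftrightarrow v$. A direct computation then yields the identity
\begin{equation*}
2 \Big[ \Big( \int \phi\, w \Big) \Big( \int \psi\, w \Big) - \Big( \int \phi\psi\, w \Big) \Big( \int w \Big) \Big] = - \iint w(u)\, w(v)\, \big( \phi(u) - \phi(v) \big) \big( \psi(u) - \psi(v) \big) \, du \, dv.
\end{equation*}
Because $\phi$ is non-decreasing and $\psi$ is non-increasing, the factors $\phi(u) - \phi(v)$ and $\psi(u) - \psi(v)$ always have opposite signs (or at least one of them is zero), so their product is $\le 0$; combined with $w(u)\,w(v) \ge 0$, the integrand on the right is nonnegative and the displayed quantity is therefore $\ge 0$. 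This is exactly the weighted correlation inequality above, and hence Eq.~(\ref{eq:clminterg}). For the equality clause I would simply note that if $f_1/f_2 = \phi$ is constant then $\phi(u) - \phi(v) \equiv 0$ makes the double integral vanish identically, forcing equality in Eq.~(\ref{eq:clminterg}); the case of $g_1/g_2 = \psi$ constant is identical.

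The main obstacle I anticipate is not the algebra of the symmetrization, which is routine, but the measure-theoretic bookkeeping at points where $f_2$ or $g_2$ vanishes, since there the ratios $\phi$ and $\psi$ need not be defined and the passage to the weight $w$ must be shown to be lossless. I would handle this by restricting all integrals to $\operatorname{supp}(w)$ and observing that $f_1 g_2 = \phi\, w$, $f_2 g_1 = \psi\, w$, and $f_1 g_1 = \phi\psi\, w$ are all supported there as well, so each of the four integrals appearing in Eq.~(\ref{eq:clminterg}) is unaffected by the restriction; the integrability assumptions then secure the single application of Fubini's theorem needed to pass to the double integral.
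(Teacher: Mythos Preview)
The paper does not give its own proof of this statement: Theorem~\ref{thm:intratio} is simply quoted from Wijsman (1985) as a known result and then applied in the proof of Theorem~\ref{thm:rconvex}. So there is nothing in the paper to compare against.

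Your argument is the standard one and is correct. Rewriting the four integrals against the weight $w=f_2g_2$ reduces the claim to the weighted Chebyshev correlation inequality for oppositely monotone $\phi$ and $\psi$, and your symmetrization identity
\[
2\Big[\Big(\int \phi w\Big)\Big(\int \psi w\Big)-\Big(\int \phi\psi w\Big)\Big(\int w\Big)\Big]
= -\iint w(u)w(v)\,(\phi(u)-\phi(v))(\psi(u)-\psi(v))\,du\,dv
\]
is exactly right, with the sign analysis and the equality clause handled correctly. The only point worth tightening is the sentence ``wherever $f_2$ or $g_2$ vanishes the corresponding ratio forces $f_1$ or $g_1$ to vanish as well'': as stated this need not hold (nothing in the hypotheses prevents $f_2(u_0)=0$ while $f_1(u_0)\neq 0$ and $g_2(u_0)\neq 0$, in which case $f_1g_2$ contributes outside $\operatorname{supp}(w)$). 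You already flag this as the main obstacle; to make the reduction airtight you should either read the monotonicity hypothesis as implicitly requiring $f_2>0$, $g_2>0$ on the relevant support (which is how Wijsman's result is typically applied, and is the case in this paper where $f_2=\rho(s,n)$ and $g_2=s$), or extend $\phi,\psi$ monotonically to $\pm\infty$ on the zero sets and argue via truncation and monotone convergence.
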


\begin{proof}[Proof of Theorem \ref{thm:rconvex}]

Let us denote by $\tilde \rho (s,n)$ the susceptibility density of the $n$th infected individuals (i.e., the node who \textit{was infected} at step $n$).
Recall that  $\rho(s,n)$ is the density function of the susceptibility of the (susceptible) population at step $n$. 
Since the likelihood of getting infected is {proportional} to the susceptibility value of the node, $s$, it holds that:
\begin{equation}\label{eq:tildedef}
	\tilde \rho (s,n) = \frac {\rho (s,n) \cdot s} {\int  \rho ( \sigma , n) \cdot \sigma d \sigma} 
\end{equation}

According to the derivation of Lemma \ref{lem:rdef} (Eq. (\ref{eq:rdef})), the contribution to $R(n)$ of a susceptible node $a$ is approximately $S(a) \cdot I(a)$. 
Thus, the contribution of the $n$th \textit{infected} node to $R(n)$ is the same as $\int \rho(s,n) s \varphi(s) ds$ from Eq. (\ref{eq:rdef}), but where $\tilde{\rho}(s,n)$ replaces $\rho(s,n)$. Further, this contribution is exactly $R(n)-R(n+1)$. Hence, we have:  
\begin{equation}\label{eq:rdiff}
	R(n)-R(n+1) 
	=
	\int \tilde \rho (s, n) \cdot s \cdot\varphi(s) ds
\end{equation}
Plugging  Eq. (\ref{eq:tildedef}) into Eq. (\ref{eq:rdiff}) yields: 
\begin{equation}\label{eq:rdiff2}
	R(n)-R(n+1) 
	=
	\int \frac{ \rho (s,n)  s } {\int \rho ( \sigma , n)  \sigma d \sigma}  s \varphi( s ) ds
	=
	\frac
	{\int \rho ( s , n)  s^2  \varphi(s) ds}
	{\int \rho ( s , n)  s d s}
\end{equation}
Per Lemma \ref{lem:ratiodec}, it holds that:
\begin{equation}\label{eq:moninc1}
	\frac{\rho(s,n+1)}{\rho(s,n)}\text{ is monotonically non-increasing in \ensuremath{s}}.
\end{equation}
In addition, since we assumed that $\varphi(s)$ is monotonically non-decreasing in $s$ (recall Subsection 2.1),
\begin{equation}\label{eq:moninc2}
	\frac{s^2 \cdot \varphi(s)}{s} = s \cdot \varphi(s) \; \text{  is monotonically increasing in \ensuremath{s}}.
\end{equation}
Note that for any $s,n$: $s\ge0$ and $\rho(s,n)\ge0$.
Hence, according to Eq. (\ref{eq:moninc1}) and (\ref{eq:moninc2}), and by Theorem \ref{thm:intratio}, it holds that
\begin{equation}
	\int \rho(s,n+1)  s^2 \varphi(s) ds \int \rho(s,n)  s ds \le \int \rho(s,n+1)  s ds \int \rho(s,n)  s^2 \varphi(s) ds.
\end{equation}
or
\begin{equation}
	\frac{\int \rho(s,n+1)  s^2 \varphi(s) ds}{\int \rho(s,n+1)  s ds} \le \frac{\int \rho(s,n)  s^2 \varphi(s) ds}{\int \rho(s,n)  s ds}   
\end{equation}
Following Eq. \ref{eq:rdiff2}, it holds that
$$R(n+1) - R(n+2) \le R(n)-R(n+1)$$
and we conclude the proof.

\end{proof}

\section{Numerical Evaluations and Applications}\label{secvaccinetime}
In order to demonstrate the effect of the heterogeneity level on the decay process, Figure \ref{fig:R_for_Ks}  depicts the decay process for various levels of heterogeneity. The progression of $R(n)$ is illustrated for various values of $k$, the shape parameter of a spreading Gamma distribution. Gamma distributions with finite shape $k>0$ and finite scale $\theta > 0$ are frequently used to model various real-world phenomena, including the spread of COVID-19 \cite{endo2020estimating}.

\begin{figure}[h]%
%	\centering
    \includegraphics[width=15cm]{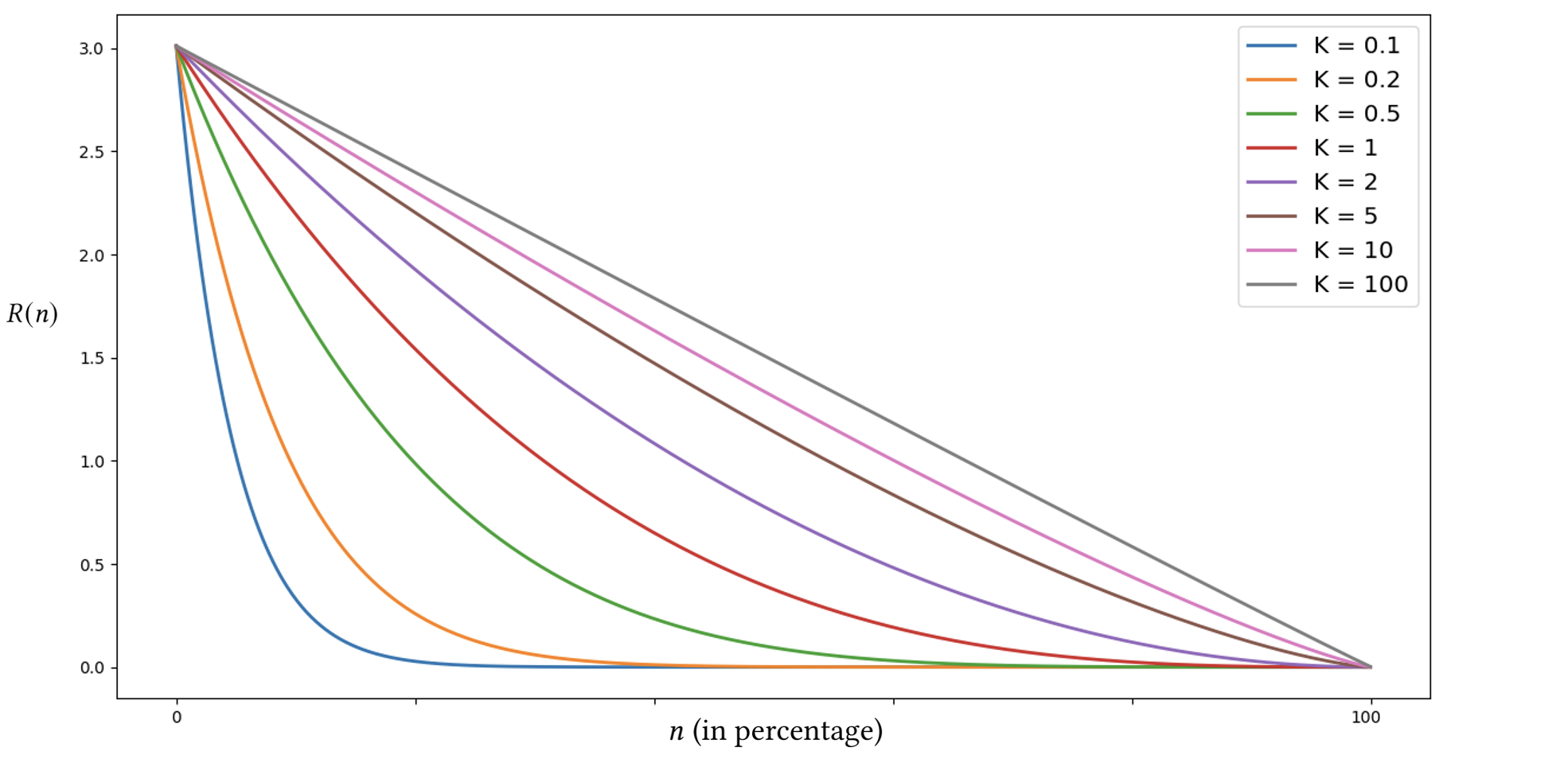}
    \caption{$R(n)$ progression for various values of $k$, the shape parameter of a spreading Gamma distribution, for $R_0 = 3$.}
	\label{fig:R_for_Ks}%
\end{figure}

{As can be seen, and in agreement with our results, the decay process remains convex for any value of $k$ (and approaches linearity when the heterogeneity level is low). Yet, the spreading heterogeneity level affects dramatically the sharpness of the reduction.
Thus, we aim at examining the sensitivity of the Herd Immunity Threshold (HIT) to countermeasure policies, in light of heterogeneity.}
\\

\noindent
{Next we use numerical examples to demonstrate  the effect of the convexity property on pandemic countermeasures. }

\paragraph{Vaccinations}
We consider the following problem: A limited supply of vaccines should be allocated to different regions (e.g., countries or different regions/states in a country, etc.). The operator's goal is to  minimize the total number of infected individuals prior to reaching herd immunity. Each region has its own spreading functions; In particular, different regions might have different heterogeneity levels.
We demonstrate that heterogeneity and the convexity of $R$ affect the outcome of such a vaccination process, as allocations which disregard the variability of heterogeneity across the regions, and conducts optimization by allocating the vaccines based only on the other parameters, namely population sizes $N_0$ and basic reproduction numbers $R_0$, might perform significantly worse in comparison to the heterogeneity-accounting optimizations.
Using numerical examples, we demonstrate its potential impact and the sensitivity of heterogeneous systems to vaccination policy.

To this end, we consider a bi-regional system in which the regions differ in their heterogeneity level (and all other parameters are equal, $N_0$ and $R_0=3$). One region consists of a homogeneous population, and one consists of heterogeneous population, whose spreading values are drawn from a Gamma distribution. 
{As the heterogeneity level differs between the two regions, we compare two alternative vaccine allocations:} Heterogeneity-accounting optimal allocation (i.e., numerically evaluate the revenue result from each vaccine using convex optimization based on our results) and Heterogeneity-oblivious optimal allocation (that conducts optimization according to $N_0$ and $R_0$, while disregarding {the heterogeneity difference}). 

In order to inspect the sensitivity of the allocation to the heterogeneity level, we repeat this examination while changing the heterogeneity level of the heterogeneous region, by controlling the \textit{shape} parameter, $k$, of the underlying Gamma distribution, {and derive the number of individuals infected prior to the HIT}.
The results are given in Figure \ref{fig:discuss_sens}. 
\begin{figure}[h]%
	\centering
	\subfloat[\centering Infected individuals prior to HIT as a function of the heterogenity level.]{{\includegraphics[width=6.5cm]{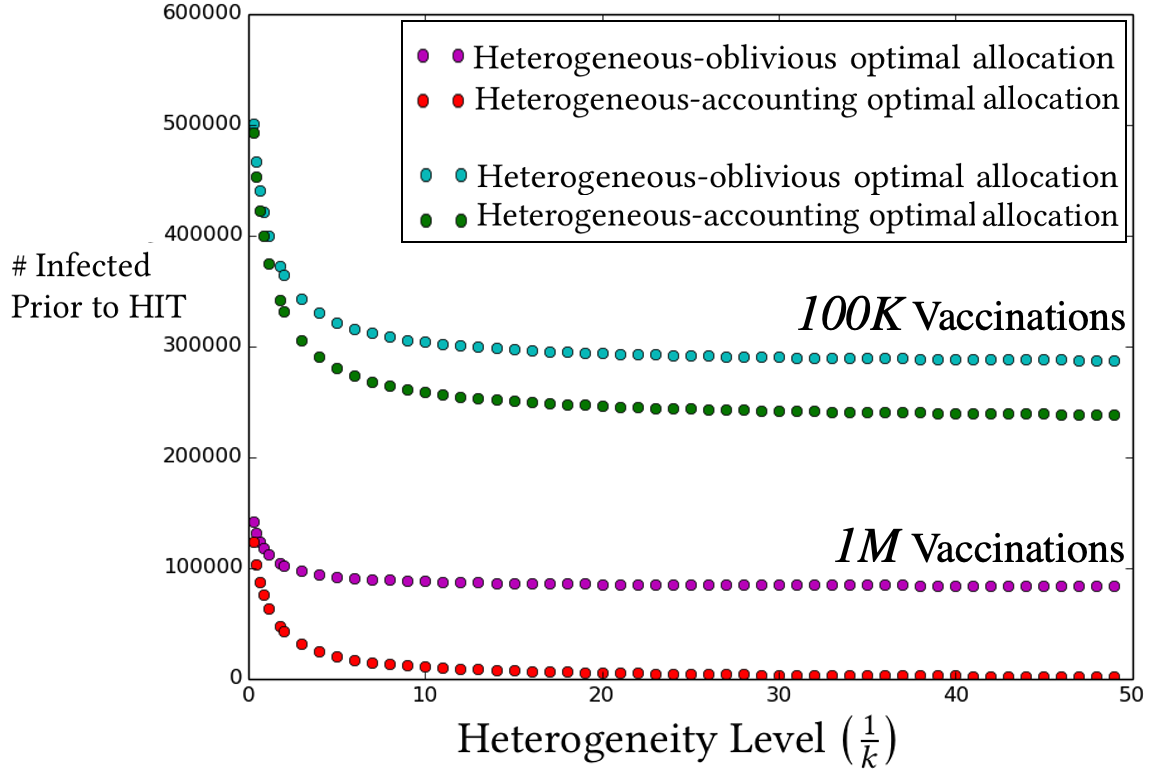} }}
	\qquad
	\subfloat[\centering Relative difference between allocation starategies as a function of the heterogenity level.]{{\includegraphics[width=6.5cm]{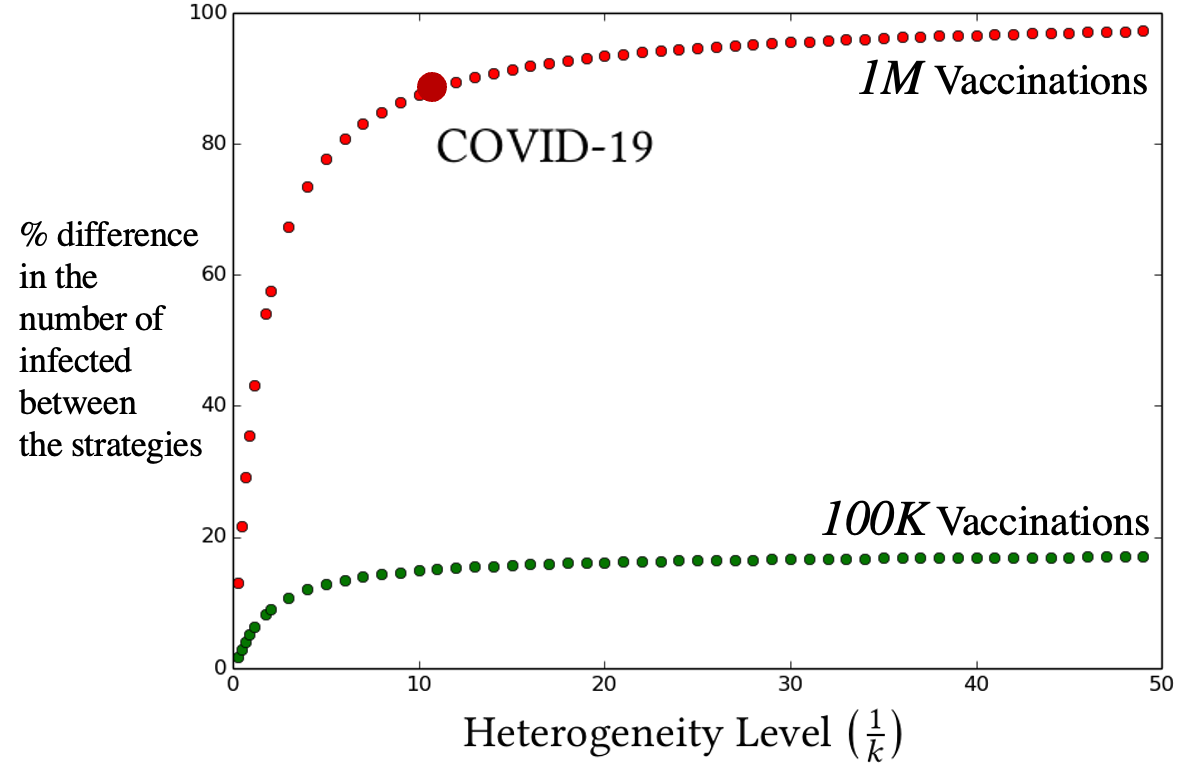} }}
	\caption{Heterogeneity effect on the system performance.}%
	\label{fig:discuss_sens}%
\end{figure}

Figure \ref{fig:discuss_sens}a depicts the number of infected individuals prior to reaching herd immunity, 
as a function of the heterogeneity level\footnote{As the shape parameter $k$ increases, the heterogeneity level decreases. Thus, we depict the results as a function of $1/k$. } under the two policies, for small number of vaccines supply ($100K$), and large number ($1M$). The red and green lines correspond to heterogeneity-accounting allocation and the purple and cyan lines correspond to the heterogeneity-oblivious allocation.
 
As can be observed, the heterogeneity-accounting allocation is significantly superior. Figure \ref{fig:discuss_sens}b depicts the {relative} difference between the number of infected (up to the HIT) under these allocations, which reaches tens of percents. The HIT is highly sensitive to the heterogeneity level of the population, and the heterogeneity-accounting optimal allocation drastically reduces the number of infected individuals prior to the HIT. In particular, for $k \approx 0.1$, which was attributed to the spreading of COVID-19 \cite{endo2020estimating}, the difference reaches a decrease of 90\% in the number of infected prior to the HIT (in comparison to the heterogeneity-oblivious optimal allocation).

Additionally, we evaluate the effect of vaccination administration timing (i.e., step in the process) on the number of individuals infected prior to reaching herd immunity. 
Such early intervention might bear counter-productive effects, as it would delay the natural process whereby super-spreaders "go out of the game" early. 
Yet, despite this observation, our simulations asserts that minimization of the HIT 
is achieved by advancing the vaccine administration timing.
The sensitivity of the HIT to the administration timing increases with the heterogeneity level of the population.

\paragraph{Lockdowns}
{In an extension} \cite{tavori2022continual} of the model used in this work, the effects of occasional spreads across networks (e.g., social gatherings and music concerts) on the spreading process were discussed.
In practice, social interactions \textbf{\textit{increase}} drastically the possibility of interacting with any random individuals within the same region or country.

It can be shown that the convexity (of $R$) result as well as the follow up applications discussed in this paper, hold for that extended model as well. This holds since the monotone likelihood ratio property of the density functions, which leads to the convexity of $R()$, is possessed under that model.
Furthermore,
\cite{tavori2022continual} 
established that the HIT is very sensitive to the lockdown type. While some lockdowns affect positively the disease blocking and decrease the HIT, others have adverse effects and might increase the HIT. 
The convexity of $R$ asserts that any restriction on social interaction  which will transform the network in to  a "semi-homogeneous" network, will increase the HIT in comparison  to the natural spread of the disease.

\section{Summary and concluding remarks}

We analyzed the stochastic process governing the viral spread in a population, and established that the decrease of the reproduction number, $R$, in heterogeneous populations is convex in the number of infected individuals.

Having established the convexity property, we addressed operational problems and examined the effect of the convexity property on infectious disease countermeasures policies, and the sensitivity of the HIT to the heterogeneity level and to the chosen policy.
We demonstrated that policies which do not account for heterogeneity might perform significantly worse (under the parameters attributed to the COVID-19 pandemic, by up to 90\%) relatively to {heterogeneity-accounting policies}. 

Further applications of the results are subject to an ongoing research.

\bibliographystyle{ACM-Reference-Format}
\bibliography{JTHL_RsConvexity_Arxiv}

\end{document}